\documentclass{ifacconf}
\usepackage{url}
\usepackage{graphicx}      % include this line if your document contains figures
\usepackage{natbib}        % required for bibliography
\usepackage{xcolor} % Ensure this is in the preamble
\usepackage{circuitikz}
\usepackage{graphicx} % in preamble

\DeclareMathSizes{10}{10}{6}{6}
\newcommand{\R}{\mathbb{R}}

\usepackage{amsmath,amssymb}

\newtheorem{definition}{Definition}
\newtheorem{assumption}{Assumption}
\newtheorem{remark}{Remark}
\newtheorem{theorem}{Theorem}
\newenvironment{proof}[1][Proof]{%
  \par\noindent\textit{#1.}\ }{\qed\par}
\begin{document}
\begin{frontmatter}

\title{Negative Imaginary and Passivity Properties of Synchronous Machine Systems}

\author[First]{Maryam Khodabakhshloo} 
\author[second]{Elizabeth L. Ratnam} 
\author[First]{Ian R. Petersen} 
\address[First]{School of Engineering, The Australian National University, Canberra, Australia (e-mail: maryam.khodabakhshloo, ian.petersen)@anu.edu.au.}
\address[second]{Department of Electrical and Computer Systems Engineering, Monash University, Melbourne, Australia (e-mail: liz.ratnam@monash.edu).}

% \color{red}
\begin{abstract}
    The recent rapid proliferation of renewable energy is fundamentally changing the dynamic operations of power systems, necessitating new approaches to assess stability for these highly nonlinear systems. In this paper, we prove that synchronous machine systems, modeled in the nonlinear $dq$-frame, possess fundamental dissipativity properties. Specifically, we show passivity from current input to voltage output and a nonlinear negative imaginary property from torque input to rotor angle output. For the nonlinear system shifted around an equilibrium point, we derive explicit conditions for both passivity and the NI property to hold.
    Finally, we demonstrate that interconnection with passive droop controllers preserves these dissipativity properties with identical supply rates, thereby ensuring closed-loop stability.
\end{abstract}

\begin{keyword}
power system stability, passivity, negative imaginary systems, synchronous machines, dissipativity
\end{keyword}

\end{frontmatter}
%===============================================================================

\section{Introduction}
% Today’s electrical power systems are large-scale, complex, and highly nonlinear networks whose stable and reliable operation is paramount. Ensuring this stability is increasingly challenging due to inherent nonlinear couplings, external disturbances, and the growing integration of power-electronic interfaces \cite{kundur2004definition}. The core dynamics of these AC grids are governed by trigonometric interactions, making analytical stability assessment a non-trivial task.

The transition towards inverter-based renewable energy resources is introducing new dynamic behaviors and instability modes in power systems, demanding analytical methods capable of handling their highly nonlinear nature \cite{kundur2004definition}. Consequently, classical approaches to transient stability analysis, as detailed in foundational books \cite{kundur2007power, sauer2017power, machowski2020power}, rely heavily on energy-based and Lyapunov-based methods. These have been extended in recent years using concepts from passivity and distributed control theory \cite{yang2019distributed, trip2016internal}. A common thread among these studies is the use of simplified synchronous machine models, such as the swing equation or the one-axis model. These models are derived using Singular Perturbation Approximations (SPA), which neglects fast electrical transients like flux and voltage dynamics due to their shorter time constants. While this simplification is valid for studying slow electromechanical oscillations, it fails to capture the fast transients that dominate during short-term disturbances \cite{monshizadeh2018output}, limiting the models' accuracy for detailed stability analysis.

% Recent research has pursued more detailed modelling approaches. Notably, \cite{nishino2025equilibrium} utilizes the full Park model with comprehensive field and damper winding dynamics to establish equilibrium-independent passivity properties. Building upon this foundation, this paper adopts a simplified yet physically interpretable model that assumes constant field current and neglects subtransient flux dynamics. This formulation preserves essential electromechanical coupling while enhancing analytical tractability, thereby facilitating the application of energy-based passivity and nonlinear negative-imaginary (NNI) frameworks to multi-machine power systems.
% \color{red}
Recent research has pursued more detailed modeling approaches. Notably, \cite{nishino2025equilibrium} utilizes the full Park model with comprehensive field and damper winding dynamics in a linearized setting, establishing equilibrium-independent passivity properties. By contrast, our analysis is fully nonlinear with a simplified but physically interpretable model (constant field current; no subtransient flux). This formulation preserves the essential electromechanical coupling while enhancing analytical tractability, enabling energy-based nonlinear negative-imaginary (NNI) and passivity analysis. Also, \cite{arghir2018grid} prove  incremental passivity for a converter system using a similar approach to the shifted passivity result presented in this paper. 
\color{black}

%  By contrast, our analysis is fully nonlinear with a simplified but physically interpretable model (constant field current; no subtransient flux). This formulation preserves the essential electromechanical coupling while enhancing analytical tractability, enabling energy-based passivity and nonlinear negative-imaginary (NNI) analysis for multi-machine systems.
% \color{red}
% Complementary to our focus on the dissipativity of synchronous machines themselves, recent advances in grid-forming control have sought to emulate these machine dynamics in power converters to stabilize low-inertia grids. For instance, \cite{arghir2018grid} propose a matching control design that augments inverters with a virtual oscillator to replicate synchronous machine behavior, including strict incremental passivity, droop characteristics, and power-sharing properties. Our dissipativity analysis of synchronous machines provides a foundational understanding that can inform and validate such emulation-based control designs, particularly in ensuring stability under interconnection and disturbances.
% \color{black}

In this work, we consider a nonlinear model of a synchronous machine and analyze its energy-based properties using passivity and nonlinear negative-imaginary (NNI) theory.
First, we prove that the system is passive from the current input to the voltage output. This analysis is then extended to a shifted equilibrium model, for which we derive explicit conditions on the system parameters and steady-state operating points that guarantee the passivity of the shifted system.
Second, we show that the system exhibits a nonlinear negative-imaginary property from the torque input to the rotor angle output and that this property is also preserved for the shifted system under specific parameter conditions.
Furthermore, we analyze the interconnection with passive droop controllers and demonstrate that passivity is preserved under identical supply rates. Together, these analytical results reveal explicit relationships between physical parameters and stability guarantees, providing a theoretical foundation for assessing the stability of modern power systems.
This analysis bridges classical reduced-order models and full Park representations, enabling tractable, energy-based assessment of low-inertia power networks while maintaining physical interpretability.

The remainder of the paper is organized as follows. Section~\ref{sec:prelim} reviews preliminaries on passivity and nonlinear negative-imaginary (NNI) systems. Section~\ref{sec:modeling} presents the synchronous machine system model,
Section~\ref{sec:main} provides the main passivity and NNI results for both the original and shifted systems and analyzes their interconnection with a passive/NI droop controller. Section~\ref{sec:conclusion} concludes the paper.

\section{Preliminaries}\label{sec:prelim}
This section reviews fundamental dissipativity concepts \cite{willems1972dissipative} that form the theoretical foundation for our analysis. We begin with classical passivity theory defined with respect to a fixed equilibrium point. We also consider the nonlinear negative imaginary (NNI) property, which generalizes the negative-imaginary concept from linear systems to nonlinear systems.

% \subsection{System Model and Equilibrium Points}

Consider the system $\Sigma$ as a general nonlinear dynamical system:
\begin{subequations}\label{eq:nonlinearmodel}
    \begin{align}
        \dot{x} &= f(x, u), \quad x \in \R^n,\; u \in \R^m, \label{eq:dyn}\\
        y &= g(x, u), \quad\;\; y \in \R^m, \label{eq:out}
    \end{align}
\end{subequations}
where \(f:\R^n\times\R^m\to\R^n\) and \(g:\R^n\times\R^m\to\R^m\) are locally Lipschitz functions. The input \(u(\cdot)\) is assumed to be piecewise continuous and bounded.

% The dissipativity properties discussed herein are defined with respect to the system's equilibrium points.

\begin{definition}
\label{def:equilibrium-set}
    The set of admissible equilibria for system \eqref{eq:nonlinearmodel} is defined as:
    \[
    \mathcal{E} := \left\{(x^\star, u^\star) \in \R^n \times \R^m \mid f(x^\star, u^\star) = 0 \right\}.
    \]
    For any equilibrium pair \((x^\star, u^\star) \in \mathcal{E}\), the corresponding equilibrium output is \(y^\star = g(x^\star, u^\star)\).
\end{definition}

Without loss of generality, the following assumption is made to simplify the analysis.
\begin{assumption}\label{ass:eq}
    There exists an equilibrium point \((x^\star, u^\star) \in \mathcal{E}\) such that the corresponding output vanishes; i.e., \(y^\star = g(x^\star, u^\star) = 0\). 
\end{assumption}

% Under this assumption, we define the shifted variables:
% \[
% \tilde{x} := x - x^\star, \quad
% \tilde{u} := u - u^\star, \quad
% \tilde{y} := y - y^\star = y.
% \]
% In these coordinates, the system equilibrium resides at the origin \((\tilde{x}, \tilde{u}, \tilde{y}) = (0, 0, 0)\).

% \textit{Passivity and Equilibrium-Independent Passivity}

The classical notion of passivity is defined with respect to a specific equilibrium point.

\begin{definition}\label{def:passivity}
    The system \eqref{eq:nonlinearmodel} is passive \cite{brogliato2007dissipative} about the equilibrium \((x^\star,u^\star,y^\star)=(0,0,0)\), if there exists a continuously differentiable storage function \(V:\R^n\to\R_{\geq 0}\) with \(V(0)=0\) such that
    \begin{equation}\label{eq:passivity_ineq}
        \dot{V}(x) \leq y^\top u.
    \end{equation}
    % \color{red}
    Furthermore, if the inequality
    \[
        \dot{V}(x) \leq y^\top u - \rho y^\top y.
    \]
    holds for some \(\rho>0\), the system is output strictly passive.
\end{definition}

% However, passivity about a single equilibrium can be restrictive when stability guarantees are required across multiple operating points. This limitation motivates the concept of equilibrium-independent passivity \cite{simpson2018equilibrium,monshizadeh2019conditions}.

% \begin{definition}\label{def:shifted_passivity}
%     System \eqref{eq:nonlinearmodel} is equilibrium-independent passive if for every equilibrium pair \((x^\star,u^\star)\in\mathcal{E}\), there exists a continuously differentiable storage function \(W_{x^\star}:\R^n\to\R_{\geq 0}\) with \(W_{x^\star}(x^\star)=0\) such that
%     \[
%         \dot{W}_{x^\star}(x) \leq (u - u^\star)^\top (y - y^\star).
%     \]
%     Moreover, if there exists a function \(\rho>0\) such that
%     \[
%         \dot{W}_{x^\star}(x) \leq (u - u^\star)^\top (y - y^\star) - \rho(\|y - y^\star\|)
%     \]
%     then the system is said to be equilibrium-independent strictly passive.
% \end{definition}

% \textit{Nonlinear Negative Imaginary Systems}

The negative imaginary (NI) property was first introduced for linear systems~\cite{petersen2010feedback} and has been used in the analysis of lightly damped mechanical and electrical networks. 
Both passivity and the NI properties characterize systems that do not generate energy. In the case of mechanical systems, the passivity property describes systems with force inputs and velocity outputs, whereas the NI property describes systems with force inputs and position outputs. To extend the NI property beyond linear systems, the nonlinear negative imaginary (NNI) property~\cite{ghallab2025negative} is defined.

\begin{definition}\label{def:nni}
    Suppose that \((x^\star, u^\star) = (0, 0)\) is an equilibrium of~\eqref{eq:nonlinearmodel} with \(y^\star = g(0,0) = 0\).  
    The system is nonlinear negative imaginary (NNI)  if there exists a continuously differentiable, nonnegative storage function \(V:\R^n \to \R_{\geq 0}\) such that, for all \(t \geq 0\)
    \begin{equation}\label{eq:nni_ineq}
        \dot{V}(x) \leq \dot{y}^\top u.
    \end{equation}
\end{definition}

This inequality is similar to the classical passivity condition, but with the supply rate \( \dot{y}^\top u \) instead of \( y^\top u \).  
Hence, passivity is related to the energy exchange between input and output, whereas the NNI property is related to the energy exchange between the input and the rate of change of the output.

% Analogous to the extension from passivity to equilibrium-independent passivity, we define the equilibrium-independent NNI property.

% \begin{definition}\label{def:shifted_nni}
%     System~\eqref{eq:nonlinearmodel} is equilibrium-independent NNI if for every equilibrium pair \((x^\star, u^\star) \in \mathcal{E}\), there exists a storage function \(W_{x^\star}:\R^n \to \R_{\geq 0}\) with \(W_{x^\star}(x^\star)=0\) such that
%     \begin{equation}\label{eq:shifted_nni_ineq}
%         \dot{W}_{x^\star}(x) \leq \dot{\tilde{y}}^\top \tilde{u},
%     \end{equation}
%     where \(\tilde{y} = y - y^\star\) and \(\tilde{u} = u - u^\star\).
%      \color{red} Moreover, if there exists a function \(\rho>0\) such that
%     \[
%         \dot{W}_{x^\star}(x) \leq (u - u^\star)^\top (y - y^\star) - \rho(\|y - y^\star\|)
%     \]
%     then the system is said to be equilibrium-independent strictly passive.
% \end{definition}

\section{Model of a Synchronous Machine Connected to a Current Source}\label{sec:modeling}

The mathematical modeling of synchronous machines (SMs) has been widely studied in the literature, with various levels of simplification depending on the intended analysis \cite{kundur2007power, grainger1999power, sauer2017power}. Most of the conventional models are typically derived under steady-state and balanced conditions to simplify the analysis.

In this study, following the electromechanical modeling approaches presented in \cite{zhong2010synchronverters}, we consider a simplified representation of a cylindrical-rotor synchronous generator with one pole pair per phase. The following assumptions are made to reduce model complexity, while capturing the essential dynamic behavior:

\begin{enumerate}
    \renewcommand{\labelenumi}{(\roman{enumi})}
\item The field current, $i_f$, is constant;
\item Damper windings, magnetic saturation, and eddy current effects are neglected;
\item The stator windings are balanced, sinusoidally distributed, and star-connected without a neutral connection.
\end{enumerate}

The considered system consists of a synchronous machine connected to an external network, as illustrated in Figure \ref{SG_T_R}. The stator terminals are linked to a load bus through a transmission line modeled by a series impedance with resistance $R_l$ and inductance $L_l$. At the load bus, a constant current source operates in parallel with a resistive load $R_L$.

\begin{figure}[h]
\centering
\scalebox{1}{ % <-- scaling starts here

\begin{circuitikz}
    \small
    % Sinusoidal voltage source (Generator)
    \draw (-1,0) to[sinusoidal voltage source, l_=$ SM $] (-3,0);
    \draw[-stealth, black, thick] (-2.85,-.25) arc[start angle=-80,end angle=150,radius=0.25];
    \node at (-2.75,-0.5) {$(T,\omega)$};
    % \draw (0,-2) -- (5.1,-2);
    \draw [color=black](-1,0) to[R, bipoles/length=18pt] (.4,0) to[L, bipoles/length=18pt] (1.4,0)
        to[R, bipoles/length=18pt] (2.5,0) to[L, bipoles/length=18pt] (3.25,0)to (3.65,0)to (4.1,0) to (5,0); % smaller R and L
    \draw [color=black](-1,1) to[R, bipoles/length=18pt] (.4,1) to[L, bipoles/length=18pt] (1.4,1)
        to[R, bipoles/length=18pt] (2.5,1) to[L, bipoles/length=18pt] (3.25,1)to (3.65,1) to (4.55,1)to (5,1); % smaller R and L    
    \draw [color=black](-1,-1) to[R, bipoles/length=18pt] (.4,-1) to[L, bipoles/length=18pt] (1.4,-1)
        to[R, bipoles/length=18pt] (2.5,-1) to[L, bipoles/length=18pt] (3.25,-1) to (3.65,-1)to (5,-1); % smaller R and L    
    
    \draw [color=black] (3.5,-1) to[R, bipoles/length=18pt] (3.5,-2);
    \draw [color=black] (4,0) to[R, bipoles/length=18pt] (4,-1)to (4,-2);
    \draw [color=black] (4.55,1) to[R, bipoles/length=18pt] (4.55,0) to (4.55,-2);
    %%%%%%%%%%%%%%%%%%%%%%%%%%%%%%%%%%%%%%%%%%%%%%%%%%%%%%%%%%%%%%%
    \draw[fill=black] (1.4,1) circle (1.2pt);
    \draw[fill=black] (1.4,0) circle (1.2pt);
    \draw[fill=black] (1.4,-1) circle (1.2pt);
    %%%%%%%%%%%%%%%%%%%%%%%%%%%%%%%%%
    \draw[fill=black] (4.55,1) circle (1.2pt);
    \draw[fill=black] (4,0) circle (1.2pt);
    \draw[fill=black] (3.5,-1) circle (1.2pt);
    %%%%%%%%%%%%%%%%%% I %%%%%%%%%%%%%%%%%%%%
    \draw (5.25,1) circle (.25cm);
    \draw (5.25,0) circle (.25cm);
    \draw (5.25,-1) circle (.25cm);
    \draw[->,thick] (5.5,1) -- (5,1);
    \draw[->,thick] (5.5,0) -- (5,0);
    \draw[->,thick] (5.5,-1) -- (5,-1);
    %%%%%%%%%%%% I %%%%%%%%%%%%%%%
    \color{black}
    \node at  (5.25,1.4)   {$I_b$};
    \node at  (5.25,.4)   {$I_a$};
    \node at  (5.25,-.6)   {$I_c$};                
    \color{black}    
    %%%%%%%%%%%%%%%%%%%%%%%%%%%%%%%%%
    \draw (-1,1) -- (-1,-1);    
    %%%%%%%%%%%%%%%%%%%%%%%%%%%%%%%%%
    \draw[thick] (4.4,-2) -- (4.7,-2);
    \draw[thick] (3.85,-2) -- (4.15,-2);
    \draw[thick](3.3,-2) -- (3.7,-2); 
    %%%%%%%%%%%  R_L %%%%%%%%%%%%%%%%
    \node at (3.2,-1.5) {$R_L$};
    \node at (3.7,-.5) {$R_L$};
    \node at (4.85,0.5) {$R_L$};
    %%%%%%%%%%%  R_s %%%%%%%%%%%%%%%%
    \node at (-.2,1.25) {$R_s$};
    \node at (-.2,.25) {$R_s$};
    \node at (-.2,-.75) {$R_s$};    
    %%%%%%%%%%%  L_s %%%%%%%%%%%%%%%%
    \node at (.88,1.3) {$L_s$};
    \node at (.88,.3) {$L_s$};
    \node at (.88,-.7) {$L_s$};  
    %%%%%%%%%%%  R %%%%%%%%%%%%%%%%
    \node at (1.9,1.25) {$R_l$};
    \node at (1.9,.25) {$R_l$};
    \node at (1.9,-.75) {$R_l$};    
    %%%%%%%%%%%  L %%%%%%%%%%%%%%%%
    \node at (2.85,1.25) {$L_l$};
    \node at (2.85,.25) {$L_l$};
    \node at (2.85,-.75) {$L_l$};    
    %%%%%%%%%%%%V_abc%%%%%%%%%%%%%%%
    \color{black}
    \node at  (1.4,1.2)   {$v_b$};
    \node at (1.4,.2)   {$v_a$};
    \node at  (1.4,-.8)   {$v_c$};     
    %%%%%%%%%%%% V %%%%%%%%%%%%%%%
    \color{black}
    \node at  (4.55,1.2)   {$V_b$};
    \node at (4,.2)   {$V_a$};
    \node at  (3.5,-.8)   {$V_c$};        
    \draw[->, thick] (-1,0) -- (-.7,0) node[midway, above, yshift=5pt]{$i_{a}$};
    \draw[->, thick] (-1,1) -- (-.7,1) node[midway, above, yshift=5pt]{$i_{b}$};
    \draw[->, thick] (-1,-1) -- (-.7,-1) node[midway, above, yshift=5pt]{$i_{c}$};         
\end{circuitikz}}
\caption{SM connected to an impedance load via transmission line $L_l$, a current source $I_{abc}$ representing the external network.}
\label{SG_T_R}
\end{figure} 
To model the electrical dynamics of the SM, $\theta$ and $\omega = \dot{\theta}$ denote the rotor angle and angular velocity, respectively. The stator winding resistance and inductance are represented by $R_s$ and $L_s$. The rotor carries a field winding that produces a magnetic field characterized by the field current $i_f$. The magnetic coupling between the constant rotor field and the stator windings is represented by the mutual inductance $M_f$. Three-phase stator quantities, such as the current $i_{abc} = [i_a, i_b, i_c ]^\top$ and the induced voltage $e_{abc} = [ e_a, e_b, e_c ]^\top$, are represented as vectors in the $abc$ reference frame. Consequently, the back electromotive force (EMF) induced in the stator windings is expressed as
\[
e_{abc} = M_f i_f \omega
\begin{bmatrix}
\sin\theta\\
\sin(\theta-\tfrac{2\pi}{3})\\
\sin(\theta-\tfrac{4\pi}{3})
\end{bmatrix}.
\]

The external current source is assumed to operate at a fixed angular velocity $\omega^s$. Choosing the phase reference as zero gives $\theta^s = \omega^s t$, leading to the following current vector
\[
I_{abc} = I_s
\begin{bmatrix}
    \sin\theta^s \\
    \sin\left(\theta^s - \frac{2\pi}{3}\right) \\
    \sin\left(\theta^s - \frac{4\pi}{3}\right)
\end{bmatrix}.
\]
where $I_s$ is the current amplitude, and $I_{abc} = [ I_a, I_b, I_c ]^\top$ denotes the three-phase current source vector. Applying Kirchhoff's current law, yields the stator current dynamics
\[
L \frac{di_{abc}}{dt} = -(R_s + R_l) i_{abc} + e_{abc} - R_L(i_{abc} + I_{abc}).
\]

where $L = L_s + L_l$ represents the total inductance of the stator and transmission line. The mechanical dynamics of the rotor follow Newton’s second law
\[
J\dot{\omega} = -D\omega + T_m - T_e,
\]
where $J$ is the rotor moment of inertia, $D$ is the damping coefficient, $T_m$ is the mechanical input torque, and $T_e$ is the electromagnetic torque. For a constant field current, the electromagnetic torque can be expressed as \cite{zhong2010synchronverters}
\[
T_e = \frac{1}{\omega} e_{abc}^\top i_{abc}.
\]
By combining the electrical and mechanical subsystems, the complete system model in the $abc$-frame is
\begin{equation}
\begin{aligned}
    \dot{\theta} &= \omega, \\
    J \dot{\omega} &= -D\omega - \frac{1}{\omega} e_{abc}^\top i_{abc} + T_m, \\
    L \frac{d i_{abc}}{dt}  &= -R i_{abc}  + e_{abc} - R_L I_{abc},
\end{aligned}
\label{eq:fullmodel_abc}
\end{equation}
where $R = R_s + R_l + R_L$ represents the total series resistance of the stator, transmission line, and load. The model  (\ref{eq:fullmodel_abc}) captures the essential electromechanical interaction between the SM and the external network, which includes a current source at the load bus.

% \textit{DQ-Transformation}
For the analysis of the three-phase model derived in (\ref{eq:fullmodel_abc}), we describe all of the electrical quantities in the $dq$-reference frame. 
This transformation converts the time-varying sinusoidal signals into constant quantities, thereby simplifying the analysis and interpretation of the system dynamics. The standard Park transformation \(T_{dq}(\varrho)\) is defined as
\[
T_{dq}(\varrho) = \sqrt{\frac{2}{3}} 
\begin{bmatrix}
    \cos(\varrho) & \cos\left(\varrho - \frac{2\pi}{3}\right) & \cos\left(\varrho - \frac{4\pi}{3}\right) \\
    \sin(\varrho) & \sin\left(\varrho - \frac{2\pi}{3}\right) & \sin\left(\varrho - \frac{4\pi}{3}\right)
\end{bmatrix},
\]
where \(\varrho\) is the transformation angle. Generally, any rotating reference angle can be used. However, in this study, we choose as the instantaneous rotor angle:
\[
    \varrho := \theta(t).
\]

By applying the $dq$ transformation, the reference frame is aligned with the rotor magnetic field, such that the $d$-axis coincides with the field axis. 
The derivative of the current vector in the $dq$ coordinate satisfies
\[
\begin{aligned}
\frac{di_{dq}}{dt} &= \frac{dT_{dq}(\varrho)}{dt} i_{abc} + T_{dq}(\varrho) \frac{di_{abc}}{dt} \\
&= \dot{\varrho} 
\begin{bmatrix}
-i_q \\[4pt]
i_d
\end{bmatrix}
+ T_{dq}(\varrho) \frac{di_{abc}}{dt} 
= \omega
\begin{bmatrix}
-i_q \\[4pt]
i_d
\end{bmatrix}
+ T_{dq}(\varrho)\frac{di_{abc}}{dt}.
\end{aligned}
\]

Similarly, the external current source is expressed in the dq-reference frame as follows, where the magnitude is redefined as $I =\sqrt{\frac{3}{2}} I_s$:

\[
I_{dq} =
\begin{bmatrix}
I_{d} \\[4pt]
I_{q}
\end{bmatrix}
= T_{dq}(\varrho) I_{abc}
= I
\begin{bmatrix}
\sin(\theta^s - \theta) \\[4pt]
\cos(\theta^s - \theta)
\end{bmatrix}.
\]
and the voltage at the load bus is defined as 
\[
V_{dq} =
\begin{bmatrix}
V_{d} \\[4pt]
V_{q}
\end{bmatrix}
= T_{dq}(\varrho) V_{abc}
= R_LI
\begin{bmatrix}
\sin(\theta^s - \theta) \\[4pt]
\cos(\theta^s - \theta)
\end{bmatrix}.
\]
By substitution, the stator current dynamics become
\[
L\frac{di_{dq}}{dt}
= -Ri_{dq}
+ L\omega
\begin{bmatrix}
-i_q \\[4pt]
i_d
\end{bmatrix}
- R_L I_{dq}.
\]

Defining the constant \(b = M_f i_f \sqrt{3/2}\), the complete electromechanical model in the $dq$-coordinates is
\begin{equation}
\begin{aligned}
    \dot{\delta} &= \omega - \omega^s, \\
    J \dot{\omega} &= -D \omega - b i_q + T_m, \\
    L \dot{i}_{d} &= - R i_d + L \omega i_q - R_L I \sin\delta, \\
    L \dot{i}_{q} &= - R i_q - L \omega i_d + b \omega - R_L I \cos\delta.
\end{aligned}
\label{eq:full_model_dq}
\end{equation}

where \(\delta = \theta^s - \theta\) represents the rotor angle deviation with respect to the angle of the external current source.
\begin{remark}
The classical swing equation, which describes the rotor dynamics of a synchronous machine connected to an infinite bus, can be derived from the above electromechanical model using suitable approximations. 
To derive this model, the electrical subsystem is assumed to evolve much faster than the mechanical subsystem. Consequently, the current dynamics are assumed to reach their steady-state values almost instantaneously relative to the slower mechanical variables. 
Thus, we set the derivatives \(\dot{i}_d, \dot{i}_q\) equal to zero, implying that the fast electrical subsystem is replaced by its quasi–steady-state manifold, consistent with the approximations in \cite{sauer2017power}. Furthermore, by neglecting the small resistance of the stator and transmission line; i.e., \(R_l+R_s \approx 0\) and considering an infinite bus, the electrical subsystem reduces to
\[
\begin{aligned}
    0 &= L \omega i_q - V \sin\delta, \\
    0 &= -L \omega i_d + b \omega - V \cos\delta.
\end{aligned}
\]
Substituting the corresponding expression for \(i_q\) into the mechanical equation yields the improved swing equation in \cite{zhou2008improved}:
\[
\begin{aligned}
    J \dot{\omega} &= -D \omega - \frac{b V \sin\delta}{L \omega} + T_m.
\end{aligned}
\]
Deriving the classical swing equation requires two additional approximations.
\begin{enumerate}
    \item The rotor speed \(\omega\) in the denominator of the electrical torque term is approximated by the constant synchronous speed \(\omega^s\). This approximation is valid for small frequency deviations around the operating point.
    \item The equation is multiplied through by \(\omega^s\) to convert it from a torque balance equation to a power balance equation using the relation \(P = T \omega^s\).
\end{enumerate}

Performing these steps, and defining the new damping coefficient \(\tilde{D} = D \omega^s\) and the mechanical power \(P_m = T_m \omega^s\), we arrive at the classical form of the swing equation

where \(P_{\text{max}} = \frac{M_f i_f V}{\omega_s L_s}\) is the maximum synchronizing power. This demonstrates that, under the standard assumptions of fast stator dynamics, negligible resistance, and small speed deviations, the proposed electromechanical model reduces to the classical swing equation.
\end{remark}

\section{Main results}\label{sec:main}

The following section considers the inherent robustness of the electromechanical model \eqref{eq:full_model_dq} using dissipativity theory. We prove that the system shows several energy-based properties under different input–output configurations, including passivity, passivity of the shifted system, nonlinear negative-imaginary behavior, and the NNI property of the shifted system. 
% All results are based on the derived electromechanical model, with appropriate input–output selections and equilibrium shifts revealing distinct energy-based properties.
Furthermore, we show that both passivity and NNI properties are preserved under interconnection with passive or negative-imaginary controllers, such as various droop-type controllers.

% \textit{Equilibrium Point}

The equilibrium state of system~\eqref{eq:full_model_dq} is denoted by 
\((\delta^s, \omega^s, i_d^s, i_q^s)\), corresponding  to the steady-state 
inputs represented by \((I_d^s, I_q^s)\). 
At the equilibrium, all time derivatives vanish:
\begin{equation}
    \dot{\omega} = 0, \quad
    \dot{i}_d = 0, \quad
    \dot{i}_q = 0.
    \label{eq:equilibrium_condition}
\end{equation}
Substituting the conditions in (\ref{eq:equilibrium_condition}) into the system equations yields the steady-state relations
\begin{equation}
\begin{aligned}
    \omega &= \omega^s, \\
    0 &= -D \omega^s - b i_q^s + T_m, \\
    0 &= -R i_d^s + L \omega^s i_q^s - R_L I \sin\delta^s, \\
    0 &= -R i_q^s - L \omega^s i_d^s + b \omega^s - R_L I \cos\delta^s.
\end{aligned}
\label{eq:equilibrium}
\end{equation}

\begin{theorem}
Consider the system~\eqref{eq:full_model_dq} with state variables 
\((\delta,\omega, i_d, i_q)\) and input current source \(I_{dq}\), where the mechanical torque input is set to zero (\(T_m = 0\)).  
Let the output be the load-bus voltage \((V_{dq})\), and define
\[\begin{aligned}
    u_1 &:= I
    \begin{bmatrix}
    \sin(\theta^s - \theta) \\[4pt]
    \cos(\theta^s - \theta)
    \end{bmatrix}=\begin{bmatrix}
        I_d\\
        I_q
    \end{bmatrix}, \\
    y_1 &:= R_L
    \left(
    \begin{bmatrix}
    i_d \\[4pt]
    i_q
    \end{bmatrix}
    + I
    \begin{bmatrix}
    \sin(\theta^s - \theta) \\[4pt]
    \cos(\theta^s - \theta)
    \end{bmatrix}
    \right)=\begin{bmatrix}
        V_d\\
        V_q
    \end{bmatrix}.
\end{aligned}\]
Then, the system is passive with respect to the input-output pair \((u_1, y_1)\).
\end{theorem}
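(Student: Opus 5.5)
Our plan is to use the total stored energy of the machine as the storage function,
\[
V(\omega,i_d,i_q)=\tfrac{1}{2}J\omega^2+\tfrac{1}{2}L\left(i_d^2+i_q^2\right),
\]
which is continuously differentiable, nonnegative, and independent of $\delta$. It vanishes at $(\omega,i_{dq})=(0,0)$, so with $T_m=0$ and $u_1=0$ the origin of the mechanical/electrical states is the reference equilibrium required by Definition~\ref{def:passivity}; the angle $\delta$ only enters through $u_1$.

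We then differentiate $V$ along \eqref{eq:full_model_dq} with $T_m=0$, writing $R_L I\sin\delta=R_L I_d$ and $R_L I\cos\delta=R_L I_q$. This gives
\[
\dot V = -D\omega^2 - b\omega i_q + i_d\left(-Ri_d+L\omega i_q-R_LI_d\right)+i_q\left(-Ri_q-L\omega i_d+b\omega-R_LI_q\right).
\]
The structural cancellations are the crux of the argument. The gyroscopic terms $L\omega i_d i_q$ cancel because the $dq$ rotation is skew-symmetric, and the electromechanical coupling terms $\pm b\omega i_q$ cancel because the torque is exactly the power conjugate of the back-EMF. What remains is
\[
\dot V=-D\omega^2-R\left(i_d^2+i_q^2\right)-R_L\, i_{dq}^\top u_1 .
\]

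Next we compare this with the supply rate. Expanding gives
\[
y_1^\top u_1=R_L\, i_{dq}^\top u_1+R_L\|u_1\|^2 .
\]
Hence
\[
\dot V-y_1^\top u_1=-D\omega^2-R\|i_{dq}\|^2-2R_L\, i_{dq}^\top u_1-R_L\|u_1\|^2 .
\]
We must show this is nonpositive. Splitting $R=R_s+R_l+R_L$, the $R_L$ part completes the square:
\[
-R_L\|i_{dq}+u_1\|^2-(R_s+R_l)\|i_{dq}\|^2-D\omega^2\le 0 .
\]
This yields $\dot V\le y_1^\top u_1$, and even a strict output-type margin $-\|y_1\|^2/R_L$.

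The main obstacle is the sign bookkeeping in the last step. In the dynamics, $u_1$ enters as $-R_L I_{dq}$, so the cross term in $\dot V$ appears with a negative sign while the cross term in the supply rate appears with a positive sign. If the orientation convention for the current source were reversed, the cross terms would cancel rather than combine into a square. In that case one would instead need $\|u_1\|^2$ to be absorbed, which requires no extra condition since $R_L\|u_1\|^2\ge0$ appears in the supply rate. I would therefore check the sign of $I_{abc}$ in the KCL relation against Figure~\ref{SG_T_R}. I expect either convention to close, by completing the square or by direct dominance respectively.
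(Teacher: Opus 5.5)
Your proposal is correct and follows the paper's proof essentially step for step. You use the same storage function $\frac{1}{2}J\omega^2+\frac{1}{2}L(i_d^2+i_q^2)$, the same cancellation of the $L\omega i_d i_q$ and $b\omega i_q$ coupling terms, and the same completion of the square after splitting $R=(R_s+R_l)+R_L$. Your extra observation that the residual $-R_L\|i_{dq}+u_1\|^2=-\|y_1\|^2/R_L$ (for $R_L>0$) gives output strict passivity is a correct strengthening that the paper does not state.
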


\begin{proof}
Let the storage function represent the total stored energy of the system, consisting of the rotor’s kinetic energy and the magnetic energy stored in the inductors:
\begin{equation}
V(x) = \frac{1}{2} J \omega^2 + \frac{1}{2} L i_d^2 + \frac{1}{2} L i_q^2.
\label{eq:Storagefun}
\end{equation}
Taking the time derivative of \(V(x)\) along the system trajectories~\eqref{eq:full_model_dq} gives
\begin{equation}
\begin{aligned}
\dot{V}(x) = {} & -D\omega^2 - b i_q \omega 
- R i_d^2 + L \omega i_q i_d - R_L I i_d\sin\delta  \\
& - R i_q^2 - L \omega i_d i_q + b \omega i_q 
- R_L I i_q\cos\delta .
\end{aligned}
\label{eq:derivativeV}
\end{equation}

The supplied power is given by
\[
\begin{aligned}
y^\top_1 u_1 =V_{dq}^\top I_{dq} = {} & R_L i_d I \sin\delta + R_L I^2 \sin^2\delta \\
& + R_L i_q I \cos\delta + R_L I^2 \cos^2\delta.
\end{aligned}
\]
The dissipation inequality becomes
\[
\begin{aligned}
\dot{V}(x) - y^\top_1 u_1 &= -D \omega^2 - R i_d^2 - R i_q^2 
- 2 R_L i_d I \sin\delta\\
& - 2 R_L i_q I \cos\delta 
 - R_L I^2 (\sin^2\delta + \cos^2\delta).
\end{aligned}
\]

Since \(R = R_s + R_l + R_L > 0\), the expression is non-positive after completing the squares for the terms in \(i_d\) and \(i_q\). This leads to the inequality
\[
\dot{V}(x) \leq y_1^\top u_1,
\]
which establishes the passivity condition. That is, the system is passive with respect to the input-output pair \((u_1, y_1)\).
\end{proof}
\medskip
\noindent
Next, we demonstrate the same physical system exhibits the nonlinear negative-imaginary property when viewed from a torque input and rotor angle output.

\begin{theorem}
Consider the system~\eqref{eq:full_model_dq} with state variables 
\((\delta, \omega, i_d, i_q)\), where the input is the electromagnetic torque and the output is the rotor angle:
\[
u_2 = T_e = \frac{V_d I_d + V_q I_q}{\omega} , \qquad y_2 = \theta.
\]
Then, the system is nonlinear negative imaginary with respect to the input-output pair \((u_2, y_2)\).
\end{theorem}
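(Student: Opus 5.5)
The plan is to reduce the NNI claim to the passivity result of the preceding theorem by identifying the two supply rates. Since \(y_2 = \theta\) and \(\dot\theta = \omega\), the NNI supply rate in Definition~\ref{def:nni} is
\[
\dot{y}_2^\top u_2 = \omega\, T_e = \omega \cdot \frac{V_d I_d + V_q I_q}{\omega} = V_{dq}^\top I_{dq} = y_1^\top u_1 .
\]
So the torque–angle supply rate coincides, along trajectories, with the electrical power delivered at the load bus. That power is exactly the passivity supply rate already handled.

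I will therefore take the same storage function \eqref{eq:Storagefun},
\[
V(x)=\tfrac12 J\omega^2+\tfrac12 L i_d^2+\tfrac12 L i_q^2 .
\]
It is continuously differentiable and nonnegative, as Definition~\ref{def:nni} requires. No dependence on \(\theta\) is needed, because the NNI definition does not ask the storage function to be positive definite in the output.

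The key steps are as follows.
\begin{enumerate}
\item Recompute \(\dot V\) along \eqref{eq:full_model_dq} with \(T_m=0\), exactly as in \eqref{eq:derivativeV}. The cross terms \(\pm L\omega i_d i_q\) and \(\pm b\omega i_q\) cancel, which leaves
\[
\dot V = -D\omega^2 - R(i_d^2+i_q^2) - R_L I\,(i_d\sin\delta + i_q\cos\delta).
\]
\item Subtract \(\omega T_e = V_{dq}^\top I_{dq}\) and complete the squares as in the previous proof. This uses \(R = R_s+R_l+R_L \ge R_L\) together with \(\sin^2\delta+\cos^2\delta=1\). For example, write
\[
-R|i_{dq}|^2 - 2R_L I_{dq}^\top i_{dq} - R_L|I_{dq}|^2 = -(R-R_L)|i_{dq}|^2 - R_L|i_{dq}+I_{dq}|^2 \le 0 .
\]
\item Conclude that \(\dot V \le \dot{y}_2^\top u_2\) for all \(t\ge 0\), i.e.\ the NNI inequality \eqref{eq:nni_ineq}.
\end{enumerate}

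The main obstacle is not the algebra but well-posedness of the input–output pair. First, \(u_2 = T_e\) contains \(1/\omega\), so I will restrict attention to the physically relevant region \(\omega>0\) (a neighbourhood of \(\omega^s\)). Alternatively, I will note that the product \(\omega T_e\) extends continuously as \(V_{dq}^\top I_{dq}\), so the inequality makes sense everywhere.

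Second, \(T_e\) is expressed through the state and the current source rather than as a free exogenous signal. I will argue that the dissipation inequality is a pointwise identity in the signals \((\omega, T_e)\), so it holds for every admissible current-source trajectory \(I_{dq}(\cdot)\).

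Finally, the equilibrium normalisation \(y^\star=0\) from Assumption~\ref{ass:eq} can be met by shifting the reference for \(\theta\). Such a shift does not change \(\dot y_2\), and so it does not change the supply rate.
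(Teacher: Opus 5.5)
Your proposal is correct and follows essentially the same route as the paper: the same storage function \eqref{eq:Storagefun}, the same observation that $\dot y_2 u_2 = \omega T_e = V_d I_d + V_q I_q$, and the same completion of squares giving $\dot V - \omega T_e = -D\omega^2 - (R_s+R_l)(i_d^2+i_q^2) - R_L(i_d+I\sin\delta)^2 - R_L(i_q+I\cos\delta)^2 \le 0$. Your extra well-posedness remarks go beyond what the paper addresses, but the one about the normalization $y^\star=0$ is not quite right: at steady state $\theta = \omega^s t - \delta^s$ is not constant, so no constant shift makes it vanish, and a time-varying shift would change $\dot y_2$ and hence the supply rate; this does not affect the dissipation inequality, which is all the paper's proof verifies.
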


\begin{proof}
Using the same energy-based storage function as in~\eqref{eq:Storagefun},
the NNI property requires that
\[
\dot{V}(x) \leq \dot{y}^\top_2 u_2.
\]
Since \(\dot{y}_2 = \dot{\theta} = \omega\), this condition simplifies to
\begin{equation}
    \dot{V}(x) \leq  V_d I_d + V_q I_q.
    \label{eq:NNIcondition}
\end{equation}

The time derivative of \(V(x)\) along the system trajectories is given by~\eqref{eq:derivativeV}. Substituting \eqref{eq:derivativeV} and rearranging yields
\[
\begin{aligned}
\dot{V}(x) - (V_d I_d + V_q I_q)
= {} & -D \omega^2 - (R_s + R_l)(i_d^2 + i_q^2) \\
& - R_L (i_d + I \sin(\delta))^2 \\
& - R_L (i_q + I \cos(\delta))^2.
\end{aligned}
\]
The right-hand side is non-positive because all coefficients are positive. Therefore, the inequality~\eqref{eq:NNIcondition} holds for all trajectories of the system. Hence, the system satisfies the NNI property with respect to the input-output pair \((u_2, y_2)\).
\end{proof}

To analyze the system behavior in a neighborhood of the equilibrium point, 
we introduce the shifted variables, defined as the deviations from the equilibrium point  corresponding to~\eqref{eq:full_model_dq},
\[
 \delta:= \tilde\delta+\delta^s,\quad
 \omega:= \tilde\omega+\omega^s,\quad
 i_d:= \tilde i_d+i_d^s,\quad
 i_q:= \tilde i_q+i_q^s.
\]
Substituting the shifted variables into the electromechanical modeling of the system \eqref{eq:full_model_dq} yields 
\[
    \begin{aligned}
    \dot{\tilde\delta} &= \tilde\omega, \\[4pt]
    \dot{\tilde\omega} &= -D (\tilde\omega+\omega^s) - b(\tilde i_q+i_q^s), \\[6pt]
    \dot{\tilde i}_d &= -R(\tilde i_d+i_d^s) 
        - L(\tilde\omega+\omega^s)(\tilde i_q+i_q^s)  - R_L I\sin(\tilde\delta+\delta^s), \\[6pt]
    \dot{\tilde i}_q &= -R(\tilde i_q+i_q^s) 
        + L(\tilde\omega+\omega^s)+ b (\tilde\omega+\omega^s) \\
         & \qquad - R_L I\cos(\tilde\delta+\delta^s).
    \end{aligned}
\]

using the equilibrium conditions \eqref{eq:equilibrium}, the resulting perturbed system dynamics are
\begin{equation}
    \begin{aligned}
    \dot{\tilde\delta} &= \tilde\omega, \\[4pt]
    \dot{\tilde\omega} &= -D\,\tilde\omega - b\,\tilde i_q, \\[6pt]
    \dot{\tilde i}_d &= -R\,\tilde i_d 
        - L\tilde\omega\,\tilde i_q 
        - L\tilde\omega\, i_q^s 
        - L\omega^s \tilde i_q \\
        &\quad - R_L I\big(\sin(\tilde\delta+\delta^s)-\sin\delta^s\big), \\[6pt]
    \dot{\tilde i}_q &= -R\,\tilde i_q 
        + L\tilde\omega\,\tilde i_d 
        + L\tilde\omega\, i_d^s 
        + L\omega^s \tilde i_d + b \tilde\omega \\
        &\quad - R_L I\big(\cos(\tilde\delta+\delta^s)-\cos\delta^s\big).
    \end{aligned}
    \label{eq:shifted_system}
\end{equation}

\begin{theorem}
Consider the shifted system~\eqref{eq:shifted_system} with the state variables 
\(\tilde{x} = [\,\tilde{\delta},\, \tilde{\omega},\, \tilde{i}_d,\, \tilde{i}_q\,]^\top\).
Define the corresponding shifted input and output as
\begin{equation}
    \begin{aligned}
    \tilde{u}_1 &=
    \begin{bmatrix}
    I \big(\sin(\tilde{\delta} + \delta^s) - \sin\delta^s\big) \\[4pt]
    I \big(\cos(\tilde{\delta} + \delta^s) - \cos\delta^s\big)
    \end{bmatrix}, \\
    \tilde{y}_1 &=
    R_L
    \begin{bmatrix}
    I \big(\sin(\tilde{\delta} + \delta^s) - \sin\delta^s\big) + \tilde{i}_d \\[4pt]
    I \big(\cos(\tilde{\delta} + \delta^s) - \cos\delta^s\big) + \tilde{i}_q
    \end{bmatrix}.
    \end{aligned}
\label{pass_shifted_io}
\end{equation}

Then, the shifted system is passive from the input \(\tilde{u}_1\) to the output \(\tilde{y}_1\) if the condition
\begin{equation}
    (i_d^s)^2 + (i_q^s)^2 \leq \frac{4D(R_s+R_l)}{L^2}.      
    \label{dissipativity condition}
\end{equation}

is satisfied. Moreover, the system is strictly passive when the inequality in  \eqref{dissipativity condition} is strict.

\end{theorem}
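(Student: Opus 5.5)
We take the shifted counterpart of the energy storage function~\eqref{eq:Storagefun},
\[
W(\tilde x)=\tfrac12 J\tilde\omega^2+\tfrac12 L\tilde i_d^2+\tfrac12 L\tilde i_q^2 ,
\]
which is nonnegative, continuously differentiable and satisfies $W(0)=0$. We differentiate it along the shifted dynamics. To keep the inertia and inductance factors correct, we will obtain these dynamics by subtracting the equilibrium relations~\eqref{eq:equilibrium} from~\eqref{eq:full_model_dq}, so that the factors $J$ and $L$ multiply the derivatives as in the original model. We will also fix the sign conventions of the rotational terms $\pm L\omega i_{q,d}$ consistently with~\eqref{eq:full_model_dq}.

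The first step is to compute $\dot W$ and identify the cancellations, exactly as in the proofs of the two previous theorems. The gyroscopic terms $L\tilde\omega\tilde i_q\tilde i_d$ cancel pairwise, as do the terms $L\omega^s\tilde i_q\tilde i_d$. The electromechanical coupling terms $-b\tilde\omega\tilde i_q$ (from the rotor equation) and $+b\tilde\omega\tilde i_q$ (from the $q$-axis equation) also cancel. Writing $\tilde u_1=[\tilde u_d,\tilde u_q]^\top$, what remains is
\[
\dot W=-D\tilde\omega^2-R(\tilde i_d^2+\tilde i_q^2)+L\tilde\omega\,(i_d^s\tilde i_q-i_q^s\tilde i_d)-R_L(\tilde i_d\tilde u_d+\tilde i_q\tilde u_q),
\]
up to the sign of the equilibrium cross term, which is irrelevant for what follows. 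The equilibrium cross term appears because the shift produces terms $\tilde\omega\, i^s$ that do not cancel. It is the only new ingredient compared with the unshifted theorem.

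Next we subtract the supply rate $\tilde y_1^\top\tilde u_1=R_L(\|\tilde u_1\|^2+\tilde i^\top\tilde u_1)$. We then split $R=(R_s+R_l)+R_L$ and complete the square in the $R_L$ terms, as in the NNI proof. This gives
\[
\dot W-\tilde y_1^\top\tilde u_1=-\Big[D\tilde\omega^2+(R_s+R_l)\|\tilde i\|^2-L\tilde\omega\,c^\top\tilde i\Big]-R_L\|\tilde i+\tilde u_1\|^2 ,
\]
where $c=[-i_q^s,\; i_d^s]^\top$ and $\|c\|^2=(i_d^s)^2+(i_q^s)^2$.

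The main step, and the only real obstacle, is to show that the bracket is a nonnegative quadratic form in $(\tilde\omega,\tilde i)$. Its matrix is
\[
\begin{bmatrix} D & -\tfrac L2 c^\top\\ -\tfrac L2 c & (R_s+R_l)I_2\end{bmatrix}.
\]
By a Schur complement with respect to the positive block $(R_s+R_l)I_2$, this matrix is positive semidefinite if and only if
\[
D-\frac{L^2\|c\|^2}{4(R_s+R_l)}\ge 0 .
\]
This is exactly condition~\eqref{dissipativity condition}. Alternatively, Young's inequality $L|\tilde\omega|\,\|c\|\,\|\tilde i\|\le D\tilde\omega^2+\frac{L^2\|c\|^2}{4D}\|\tilde i\|^2$ gives the same bound. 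Hence $\dot W\le\tilde y_1^\top\tilde u_1$, which is the passivity claim.

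For the strict version, a strict inequality in~\eqref{dissipativity condition} makes the matrix positive definite. The bracket is then bounded below by $\epsilon(\tilde\omega^2+\|\tilde i\|^2)$ for some $\epsilon>0$. In addition, since $\tilde y_1=R_L(\tilde i+\tilde u_1)$, the remaining term equals $-\frac1{R_L}\|\tilde y_1\|^2$. Together these give a strict dissipation inequality, including output strict passivity in the sense of Definition~\ref{def:passivity} with $\rho=1/R_L$.
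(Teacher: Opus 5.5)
Your proposal is correct and follows the paper's proof essentially step for step. It uses the same shifted energy storage function, the same completion of the square in the $R_L$ terms, and the same Schur-complement test on the quadratic form in $(\tilde\omega,\tilde i_d,\tilde i_q)$, which yields exactly the stated bound on $(i_d^s)^2+(i_q^s)^2$. Your added details are sound refinements rather than a different route: you restore the $J$ and $L$ factors in the shifted dynamics (the printed shifted system drops them), and you note that $-R_L\|\tilde i+\tilde u_1\|^2=-\frac{1}{R_L}\|\tilde y_1\|^2$ already gives output strict passivity.
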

\color{black}
\begin{proof}
Consider the storage function  representing the energy of the shifted system as
\begin{equation}
V(\tilde{x}) = \frac{1}{2}J\tilde{\omega}^2
+ \frac{1}{2}L\tilde{i}_d^2
+ \frac{1}{2}L\tilde{i}_q^2.
\label{eq:shifted_storage}
\end{equation}
The trigonometric differences are defined as follows
\[
\cos\Delta = \cos(\tilde{\delta} + \delta^s) - \cos\delta^s, 
\quad
\sin\Delta = \sin(\tilde{\delta} + \delta^s) - \sin\delta^s.
\]
Differentiating~\eqref{eq:shifted_storage} along the trajectories of the shifted system yields
\[
\begin{aligned}
\dot{V} = {} & -D \tilde{\omega}^2 - R (\tilde{i}_d^2 + \tilde{i}_q^2)
+ L\tilde{\omega}(\tilde{i}_q i_d^s - \tilde{i}_d i_q^s)\\
& - R_L I\big(\tilde{i}_d \sin\Delta + \tilde{i}_q \cos\Delta\big),
\end{aligned}
\]

and the supplied power is given as
\[
\begin{aligned}
\tilde{u}^\top_1 \tilde{y}_1
=& R_L \big[ I\sin\Delta\, (I\sin\Delta + \tilde{i}_d)
+ I\cos\Delta\, (I\cos\Delta + \tilde{i}_q) \big] \\
= R_L I^2 & \big[ (\sin\Delta)^2 + (\cos\Delta)^2 \big]
+ R_L I \big( \tilde{i}_d \sin\Delta + \tilde{i}_q \cos\Delta \big).
\end{aligned}
\]

Subtracting the supplied power from the storage rate gives
\[
\begin{aligned}
\dot{V} - \tilde{u}^\top_1 \tilde{y}_1 = {} &
- R_L \big[ (\tilde{i}_d+I\sin\Delta)^2 + (\tilde{i}_q+I\cos\Delta)^2 \big]
- D \tilde{\omega}^2  \\
& - (R_s+R_l) (\tilde{i}_d^2 + \tilde{i}_q^2) - L \tilde{\omega} (-\tilde{i}_d i_q^s + \tilde{i}_q i_d^s).
\end{aligned}
\]

The first term, \(- R_L \big[ (\tilde{i}_d+I\sin\Delta)^2 + (\tilde{i}_q+I\cos\Delta)^2 \big]\), is strictly negative, and the remaining terms can be expressed in the  following quadratic form where \(R_s+R_l=R_{sl}\):
\[
\begin{aligned}
&- D \tilde{\omega}^2 - R_{sl}(\tilde{i}_d^2 + \tilde{i}_q^2)
- L \tilde{\omega} (-\tilde{i}_d i_q^s + \tilde{i}_q i_d^s)
\\[4pt]
&= -\begin{bmatrix}
\tilde{i}_d \\[2pt] \tilde{i}_q \\[2pt] \tilde{\omega}
\end{bmatrix}^{\!\top}
\begin{bmatrix}
R_{sl} & 0 & \tfrac{L i_q^s}{2} \\[3pt]
0 & R_{sl} & -\tfrac{L i_d^s}{2} \\[3pt]
\tfrac{L i_q^s}{2} & -\tfrac{L i_d^s}{2} & D
\end{bmatrix}
\begin{bmatrix}
\tilde{i}_d \\[2pt] \tilde{i}_q \\[2pt] \tilde{\omega}
\end{bmatrix}.
\end{aligned}
\]

Applying the Schur complement, a sufficient condition for this quadratic form to be negative semidefinite is
\begin{equation}
D \geq \frac{L^2 \big( (i_d^s)^2 + (i_q^s)^2 \big)}{4R_{sl}}.
\label{eq:strict-cond}
\end{equation}
When condition~\eqref{eq:strict-cond} holds, the dissipation inequality 
\(\dot{V} \leq \tilde{u}^\top_1 \tilde{y}_1\) is satisfied, implying passivity of the shifted system.  
If the inequality \eqref{eq:strict-cond} is strict, the system is strictly passive. 
\end{proof}
Building on the shifted model \eqref{eq:shifted_system}, we now examine its nonlinear negative-imaginary property. 
The output is defined as the shifted rotor angle from the equilibrium, and the input is taken as the electromagnetic torque, expressed as the ratio of the electrical power to the angular velocity:
\begin{equation}
\begin{aligned}
\tilde{u}_2 &:= \frac{R_L\big[ \tilde{I}_d(\tilde{I}_d-\tilde{i}_d)+\tilde{I}_q(\tilde{I}_q-\tilde{i}_q)\big]}{\tilde{\omega}}=T_e-T_e^s,\\
\tilde{y}_2 &:= \tilde{\delta}.
\end{aligned}
\end{equation}
The supply rate is therefore given by
\begin{equation}
\dot{\tilde{y}}_2\,\tilde{u}_2
= R_L\big[ \tilde{I}_d(\tilde{I}_d-\tilde{i}_d)+\tilde{I}_q(\tilde{I}_q-\tilde{i}_d)\big].
\end{equation}

\begin{theorem}
\label{thm:shifted-nni}
Consider the shifted system~\eqref{eq:shifted_system} with the input–output pair 
\((\tilde{u}_2, \tilde{y}_2)\).  
The system is nonlinear negative-imaginary (NNI) if the condition (\ref{dissipativity condition}) holds.
% \begin{equation}
%     \frac{L^2 \big( (i_d^s)^2 + (i_q^s)^2 \big)}{4R} \le D
% \label{eq:shifted-nni-cond}
% \end{equation}
Moreover, if the inequality in~\eqref{dissipativity condition} is strict, the system is strictly nonlinear negative-imaginary.
\end{theorem}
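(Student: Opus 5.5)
The proof mirrors the second theorem (the unshifted NNI result) but uses the shifted storage function \eqref{eq:shifted_storage} and the shifted dynamics \eqref{eq:shifted_system}. The NNI inequality $\dot V(\tilde x)\le \dot{\tilde y}_2\,\tilde u_2$ reduces to a dissipation inequality with supply $\dot{\tilde y}_2\,\tilde u_2$. Since $\dot{\tilde y}_2=\dot{\tilde\delta}=\tilde\omega$ and $\tilde u_2=(T_e-T_e^s)$ is defined as the shifted electrical power divided by $\tilde\omega$, the $\tilde\omega$ cancels. The supply rate becomes the quadratic power expression
\[
\dot{\tilde y}_2\,\tilde u_2 = R_L\big[\tilde I_d(\tilde I_d-\tilde i_d)+\tilde I_q(\tilde I_q-\tilde i_q)\big],
\]
with $\tilde I_d = I\sin\Delta$ and $\tilde I_q=I\cos\Delta$ in the notation of the previous proof. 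I read the stated supply rate with $\tilde i_q$ in the last slot; the $\tilde i_d$ there appears to be a typo. The sign convention for the current source must be fixed consistently with the load-bus relation $V_{dq}=R_L(i_{dq}+I_{dq})$.

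\textbf{Step 1.} Compute $\dot V$ along \eqref{eq:shifted_system}. The result is exactly the expression already obtained in the proof of the preceding theorem:
\[
\dot V=-D\tilde\omega^2-R(\tilde i_d^2+\tilde i_q^2)+L\tilde\omega(\tilde i_q i_d^s-\tilde i_d i_q^s)-R_L I(\tilde i_d\sin\Delta+\tilde i_q\cos\Delta).
\]

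\textbf{Step 2.} Subtract the supply rate and regroup the $R_L$ terms. Split $R=R_{sl}+R_L$ and complete squares in $(\tilde i_d,\tilde I_d)$ and $(\tilde i_q,\tilde I_q)$, as in the unshifted NNI proof. This should give
\[
\dot V-\dot{\tilde y}_2\tilde u_2=-R_L\big[(\tilde i_d+\tilde I_d)^2+(\tilde i_q+\tilde I_q)^2\big]-\xi^\top M\xi,
\]
where $\xi=[\tilde i_d,\tilde i_q,\tilde\omega]^\top$ and $M$ is the same $3\times3$ matrix as in the preceding proof.

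\textbf{Step 3.} Invoke the Schur-complement argument from the preceding theorem. $M\succeq0$ holds iff $D-\frac{L^2((i_d^s)^2+(i_q^s)^2)}{4R_{sl}}\ge0$, which is \eqref{dissipativity condition}. Hence $\dot V\le\dot{\tilde y}_2\tilde u_2$. The storage function is nonnegative, continuously differentiable, and vanishes at $\tilde x=0$, so the NNI property of Definition~\ref{def:nni} follows. If \eqref{dissipativity condition} is strict, then $M\succ0$ and we obtain the strict version
\[
\dot V\le\dot{\tilde y}_2\tilde u_2-\epsilon\,\|\xi\|^2 \quad\text{for some } \epsilon>0,
\]
which is the strict NNI claim.

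\textbf{Main obstacle.} The delicate step is Step 2: verifying that the cross terms $R_L I(\tilde i_d\sin\Delta+\tilde i_q\cos\Delta)$ from $\dot V$ combine with the supply-rate cross terms to form perfect squares with the correct sign. This depends entirely on the sign convention relating $\tilde I_{dq}$ and the load voltage. I would first fix the convention to match $\tilde y_1$ in \eqref{pass_shifted_io}, so that the supply equals $\tilde u_1^\top\tilde y_1$. If the signs do not close as squares, I would fall back on bounding the leftover cross term by Young's inequality against the $R_L\tilde i^2$ portion of the dissipation. A secondary point is well-definedness of $\tilde u_2$ at $\tilde\omega=0$. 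Since only the product $\tilde\omega\tilde u_2$ enters the inequality, I would treat the supply rate as defined directly by the power expression.
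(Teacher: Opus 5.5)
Your route is the same as the paper's, which just reuses the storage function \eqref{eq:shifted_storage} and the shifted-passivity computation. Step 2, however, needs a correction, and making it also removes your \emph{main obstacle}.

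Your completed-square identity holds only for the plus-sign supply $\tilde u_1^\top\tilde y_1$, not for the supply you actually wrote down. Write $\tilde I=[\tilde I_d,\tilde I_q]^\top$ and $\tilde i=[\tilde i_d,\tilde i_q]^\top$. The stated supply is then $R_L\|\tilde I\|^2-R_L\tilde I^\top\tilde i$. So the cross term $-R_L\tilde I^\top\tilde i$ in your Step 1 expression for $\dot V$ does not double into a square; it cancels exactly:
\[
\dot V-\dot{\tilde y}_2\tilde u_2=-R_L\|\tilde I\|^2-\xi^\top\big(M+R_L\,\mathrm{diag}(1,1,0)\big)\xi .
\]

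You therefore do not need to re-sign $\tilde u_2$ to match $\tilde y_1$, which would change the input in the statement. You also do not need Young's inequality. Either sign reading gives the result.

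Since $M+R_L\,\mathrm{diag}(1,1,0)\succeq M$, your Step 3 goes through verbatim. Under \eqref{dissipativity condition}, $M\succeq 0$ gives the NNI inequality. Under the strict inequality, $M\succ0$ gives the margin $-\epsilon\|\xi\|^2$ with $\epsilon=\lambda_{\min}(M)$. The diagonal now carries $R$ rather than $R_{sl}$, so the weaker condition $D\ge L^2\big((i_d^s)^2+(i_q^s)^2\big)/(4R)$ would already suffice.
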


\begin{proof}
The proof follows directly from the storage function~\eqref{eq:shifted_storage} 
and the same energy-balance argument employed in proving the passivity of the shifted system.  
When the quadratic form associated with the dissipation term is negative semidefinite, the inequality
\[
\dot{V} \le \dot{\tilde{y}}_2\,\tilde{u}_2
\]
holds, thereby establishing the  NNI property for the shifted system.  
If the condition in~\eqref{dissipativity condition} is strict, the quadratic form becomes negative definite, 
implying the strict NNI property.
\end{proof}

In what follows, we show that the passivity of the shifted system is preserved when the SM is interconnected with a droop-based torque controller. Importantly, the same supply rate is retained, which shows that the controller maintains the system’s inherent passivity.

Figure \ref{feedbackInterconnection} illustrates the feedback interconnection between the plant $(P)$ that representing the SM‌ and a torque controller $(C)$ at the mechanical port.

\begin{figure}[htbp]
    \centering
    \begin{minipage}{0.5\textwidth}
        \centering
        \begin{circuitikz}
            % Plant box
            \draw (3,-.6) rectangle (4.5,1.2);
            \node at (3.74,.3) {$P$};

            % Controller box
            \draw (3,-2.6) rectangle (4.5,-1.4);
            \node at (3.74,-2) {$C$};

            % Horizontal connections
            \draw (4.5,0) -- (5.75,0);
            \draw (1.5,0) -- (3,0);
            \draw (4.5,-2) -- (5.75,-2);
            \draw (1.5,-2) -- (3,-2);

            % Vertical interconnection
            \draw (1.5,0) -- (1.5,-2);
            \draw (5.75,0) -- (5.75,-2);
            
            % Labels
            \node at (2.3,1.2) {$(\tilde{I}_{dq})$};
            \node at (5,1.2) {$\tilde{V}_{dq}$};
            
            \node at (2.3,.3) {$(- \tilde{T}_m)$};
            \node at (1.8,-2.3) {$\tilde{T}_m$};
            \node at (5,.3) {$\tilde{\omega}$};
            \node at (5,-2.3) {$\tilde{\omega}$};

            % Arrows
            \draw[->] (4.55,-2) -- (4.5,-2);
            \draw[->] (2,.8) -- (3,.8);
            \draw[->] (2.95,0) -- (3,0);
            \draw[->] (4.5,.8) -- (5.5,.8);
            \draw[->] (5.485,0) -- (5.5,0);
            \draw[->] (1.9,-2) -- (1.85,-2);

            \draw[->] (1.5,-1) -- (1.5,-.951);
            \node at (1.1,-1) {$-1$};
            % \draw (1.5,-1) circle (10pt);
        \end{circuitikz}
        \caption{Feedback interconnection of the SM ($P$) and a torque controller ($C$). The plant includes both an electrical port $(\tilde u_e,\tilde y_e)$ and a mechanical port $(\tilde u_m,\tilde y_m)$}
        \label{feedbackInterconnection}
    \end{minipage}
\end{figure}

The electromechanical modeling of SM  is interpreted as a two-port system:  
an electrical port with the shifted pair \((\tilde u_e,\tilde y_e)\), as defined in (\ref{pass_shifted_io}), and a mechanical port defined by
\[
u_m := \tilde T_m=T_m-T_m^s, 
\qquad 
y_m := \tilde\omega.
\]

For the shifted model~\eqref{eq:shifted_system}, the storage function \( V_p \) is defined similarly to (\ref{eq:shifted_storage}). As shown earlier, the shifted system is passive under the condition (\ref{eq:strict-cond}) for the input-output pair \((\tilde u_e,\tilde y_e)\).

Next, consider a torque controller that is passive, such as a droop-based PI controller, defined by 

\begin{equation}
\tilde T_m = k_p \tilde\omega + k_i \tilde z, 
\qquad 
\dot{\tilde z} = \tilde\omega, 
\qquad k_p,k_i \ge 0.
\label{controller}
\end{equation}

This PI controller admits the storage function \( V_c = \tfrac12 k_i \tilde z^2 \), for which
\[
\dot V_c 
= k_i \tilde z \dot{\tilde z}
= (\tilde T_m - k_p \tilde\omega)\tilde\omega
\;\le\; \tilde{T_m} \tilde{\omega}=u_m y_m .
\]
Hence, the controller is passive from the input \( u_m = \tilde\omega \) to the output \( y_m = \tilde T_m \).

\vspace{.3cm}

\begin{theorem}
\label{thm:shifteddroop-passive}
Consider the shifted system~\eqref{eq:shifted_system} interconnected with the passive controller~\eqref{controller} via negative feedback, with total storage function $V_t = V_p + V_c$. Then, the closed-loop system is passive with respect to the original input-output pair defined in~\eqref{pass_shifted_io}, provided the condition in~\eqref{eq:strict-cond} holds. Furthermore, if the condition \eqref{eq:strict-cond} is strict, the closed-loop system is strictly passive.
\end{theorem}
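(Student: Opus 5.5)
The argument is a two-port passivity computation: differentiate the total storage $V_t = V_p + V_c$ along the closed loop and show that the mechanical-port cross terms cancel, leaving exactly the dissipation analyzed in the proof of the shifted passivity theorem. First, I would make explicit that the shifted plant \eqref{eq:shifted_system} now carries the mechanical input. The rotor equation becomes
\[
J\dot{\tilde\omega} = -D\tilde\omega - b\tilde i_q + \tilde T_m^{\mathrm{plant}},
\]
where $\tilde T_m^{\mathrm{plant}} = -\tilde T_m$ is the plant input under negative feedback, as in Figure~\ref{feedbackInterconnection}. With $V_p$ given by \eqref{eq:shifted_storage}, the same calculation as in that proof then yields
\[
\dot V_p - \tilde u_1^\top \tilde y_1 = -R_L\big[(\tilde i_d+I\sin\Delta)^2+(\tilde i_q+I\cos\Delta)^2\big] - \xi^\top Q\,\xi + \tilde\omega\,\tilde T_m^{\mathrm{plant}},
\]
with $\xi = [\tilde i_d,\tilde i_q,\tilde\omega]^\top$ and $Q$ the $3\times 3$ matrix displayed in that proof. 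In words, the plant is passive with respect to the stacked supply $\tilde u_1^\top\tilde y_1 + \tilde\omega\,\tilde T_m^{\mathrm{plant}}$.

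Next, I would compute the controller storage rate directly from \eqref{controller}:
\[
\dot V_c = k_i\tilde z\tilde\omega = \tilde T_m\tilde\omega - k_p\tilde\omega^2 .
\]
Adding the two rates and substituting $\tilde T_m^{\mathrm{plant}} = -\tilde T_m$, the terms $\tilde\omega\tilde T_m$ cancel. This gives
\[
\dot V_t - \tilde u_1^\top\tilde y_1 = -R_L[\cdots]^2 - \xi^\top Q\,\xi - k_p\tilde\omega^2 .
\]
The extra term $-k_p\tilde\omega^2$ amounts to replacing $D$ by $D+k_p\ge D$ in $Q$, so it only helps. By the Schur complement argument already used in the shifted passivity theorem, $Q\succeq 0$ under \eqref{eq:strict-cond}, and $Q\succ 0$ if the inequality is strict. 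Hence $\dot V_t\le \tilde u_1^\top\tilde y_1$. In the strict case, the remaining terms are bounded above by $-\epsilon\|\xi\|^2$ for some $\epsilon>0$, which gives strict passivity in the same sense as in that theorem.

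The remaining checks are that $V_t\ge 0$ (true since $k_i\ge 0$) and that $V_t$ vanishes at the closed-loop equilibrium, where I take $\tilde z=0$ there.

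I expect the main obstacle to be sign bookkeeping rather than any estimate. The paper's statement of controller passivity writes the input and output in the reverse order, and Figure~\ref{feedbackInterconnection} carries the $-1$ on the torque path. I would fix the convention that the plant's mechanical input is $-\tilde T_m$, with output $\tilde\omega$. With any other convention the cross terms add instead of cancelling, and I would check this first. A secondary issue is that $V_t$ has no $\tilde\delta$ term, so strictness must be understood as dissipation in $(\tilde i_d,\tilde i_q,\tilde\omega)$, exactly as in the earlier shifted passivity theorem, rather than as a strictly decreasing positive definite function of the full state.
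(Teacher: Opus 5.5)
Your proposal is correct and follows essentially the same route as the paper: add $V_c$ to $V_p$, cancel the $-\tilde T_m\tilde\omega$ term from the plant's negative-feedback input against the $\tilde T_m\tilde\omega$ term in $\dot V_c=(\tilde T_m-k_p\tilde\omega)\tilde\omega$, and observe that what remains is the shifted-passivity dissipation plus $-k_p\tilde\omega^2$, so the same Schur-complement condition \eqref{eq:strict-cond} gives $\dot V_t\le\tilde u_1^\top\tilde y_1$ (strictly in $(\tilde i_d,\tilde i_q,\tilde\omega)$ when the inequality is strict). Your bookkeeping, which completes the square with $R_L$ and fixes the port sign convention explicitly, is tidier than the paper's displayed computation, which writes $R$ where $R_L$ is meant and flips the sign of the cross term in the supply rate.
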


\begin{proof}
Consider the storage function as the summation of the energy of the shifted system and the controller 
\begin{equation}
V_t(\tilde{x}) = \frac{1}{2}J\tilde{\omega}^2
+ \frac{1}{2}L\tilde{i}_d^2
+ \frac{1}{2}L\tilde{i}_q^2+\frac{1}{2}k_i\tilde{z}^2.
\label{eq:shifted_storagewithcont}
\end{equation}
Differentiating~\eqref{eq:shifted_storagewithcont} along the trajectories of the closed loop system results in
\[
\begin{aligned}
\dot{V}_t(x) = {} & -D \tilde{\omega}^2 - R (\tilde{i}_d^2 + \tilde{i}_q^2)
+ L\tilde{\omega}(\tilde{i}_q i_d^s - \tilde{i}_d i_q^s)\\
& - R I\big(\tilde{i}_d \sin\Delta + \tilde{i}_q \cos\Delta\big) -\tilde{T}_m\tilde{\omega}\\
& +(\tilde T_m - k_p \tilde\omega)\tilde\omega,
\end{aligned}
\]

and the supplied power is not changed
\[
\begin{aligned}
\tilde{u}^\top_e \tilde{y}_e
&= R I^2 \big[ (\sin\Delta)^2 + (\cos\Delta)^2 \big]
- R I \big( \tilde{i}_d \sin\Delta + \tilde{i}_q \cos\Delta \big).
\end{aligned}
\]

Subtracting the supplied power from the storage rate gives
\[
\begin{aligned}
\dot{V} - \tilde{u}^\top_e \tilde{y}_e = {} &
- R I^2 \big[ (\sin\Delta)^2 + (\cos\Delta)^2 \big]
- D \tilde{\omega}^2  \\
& - R (\tilde{i}_d^2 + \tilde{i}_q^2) - L \tilde{\omega} (-\tilde{i}_d i_q^s + \tilde{i}_q i_d^s)-k_p \tilde\omega^2.
\end{aligned}
\]

Again, we have a similar condition as in (\ref{eq:strict-cond}) for the preservation of the passivity property.
\end{proof}

\section{Conclusion and Future Work}\label{sec:conclusion}
This paper presented a nonlinear model of a synchronous machine connected to a transmission line, a resistive load, and a
current source representing the remainder of a network. This model derived from three phase $abc$ equations and then expressed in a rotor-aligned $dq$ reference frame. Using the electromechanical model, we established passivity from the current input to a voltage output. We also prove the passivity of a shifted system under explicit conditions on the parameters and steady-state values $(D, R, L, i_{dq}^s)$. In addition, we demonstrate a nonlinear negative imaginary property from an electrical torque input to rotor angle output. We further showed that incorporating a passive droop controller at the mechanical port preserves the passivity property. Future work will extend the analysis to include voltage and field dynamics.

% This paper establishes fundamental dissipativity properties of synchronous machines:

%  Proven incremental passivity from current to voltage
%  Demonstrated NNI property from power to angle
%  Unified swing equation with full dynamics
%  Constructive Lyapunov certificates for stability

% These results provide theoretical foundation for:
% \begin{itemize}
%     \item Stability assessment of renewable-rich grids
%     \item Design of grid-forming converters
%     \item Transient stability enhancement
% \end{itemize}
% Future work extends to: 1) Heterogeneous grids 2) Voltage dynamics 3) Experimental validation.

\bibliography{ifacconf}             % bib file to produce the bibliography

@book{machowski2020power,
  title={Power system dynamics: stability and control},
  author={Machowski, Jan and Lubosny, Zbigniew and Bialek, Janusz W and Bumby, James R},
  year={2020},
  publisher={John Wiley \& Sons}
}

@article{trip2016internal,
  title={An internal model approach to (optimal) frequency regulation in power grids with time-varying voltages},
  author={Trip, Sebastian and B{\"u}rger, Mathias and De Persis, Claudio},
  journal={Automatica},
  volume={64},
  pages={240--253},
  year={2016},
  publisher={Elsevier}
}

@article{zhou2008improved,
  title={Improved swing equation and its properties in synchronous generators},
  author={Zhou, Jun and Ohsawa, Yasuharu},
  journal={IEEE Transactions on Circuits and Systems I: Regular Papers},
  volume={56},
  number={1},
  pages={200--209},
  year={2008},
  publisher={IEEE}
}

@article{zhong2010synchronverters,
  title={Synchronverters: Inverters that mimic synchronous generators},
  author={Zhong, Qing-Chang and Weiss, George},
  journal={IEEE transactions on industrial electronics},
  volume={58},
  number={4},
  pages={1259--1267},
  year={2010},
  publisher={IEEE}
}

@book{grainger1999power,
  title={Power system analysis},
  author={Grainger, John J},
  year={1999},
  publisher={McGraw-Hill}
}

@book{sauer2017power,
  title={Power system dynamics and stability: with synchrophasor measurement and power system toolbox},
  author={Sauer, Peter W and Pai, Mangalore A and Chow, Joe H},
  year={2017},
  publisher={John Wiley \& Sons}
}

@article{kundur2007power,
  title={Power system stability},
  author={Kundur, Prabha},
  journal={Power system stability and control},
  volume={10},
  number={1},
  pages={7--1},
  year={2007}
}

@article{brogliato2007dissipative,
  title={Dissipative systems analysis and control},
  author={Brogliato, Bernard and Lozano, Rogelio and Maschke, Bernhard and Egeland, Olav and others},
  journal={Theory and applications},
  volume={2},
  pages={2--5},
  year={2007},
  publisher={Springer}
}

@article{ghallab2025negative,
  title={Negative imaginary systems theory for nonlinear systems: A dissipativity approach},
  author={Ghallab, Ahmed G and Mabrok, Mohammed A and Petersen, Ian R},
  journal={IEEE Transactions on Automatic Control},
  year={2025},
  publisher={IEEE}
}

@article{willems1972dissipative,
  title={Dissipative dynamical systems part I: General theory},
  author={Willems, Jan C},
  journal={Archive for rational mechanics and analysis},
  volume={45},
  number={5},
  pages={321--351},
  year={1972},
  publisher={Springer}
}

@article{petersen2010feedback,
  title={Feedback control of negative-imaginary systems},
  author={Petersen, Ian R and Lanzon, Alexander},
  journal={IEEE Control Systems Magazine},
  volume={30},
  number={5},
  pages={54--72},
  year={2010},
  publisher={IEEE}
}

@article{kundur2004definition,
  title={Definition and classification of power system stability IEEE/CIGRE joint task force on stability terms and definitions},
  author={Kundur, Prabha and Paserba, John and Ajjarapu, Venkat and Andersson, G{\"o}ran and Bose, Anjan and Canizares, Claudio and Hatziargyriou, Nikos and Hill, David and Stankovic, Alex and Taylor, Carson and others},
  journal={IEEE transactions on Power Systems},
  volume={19},
  number={3},
  pages={1387--1401},
  year={2004},
  publisher={IEEE}
}

@article{yang2019distributed,
  title={Distributed stability conditions for power systems with heterogeneous nonlinear bus dynamics},
  author={Yang, Peng and Liu, Feng and Wang, Zhaojian and Shen, Chen},
  journal={IEEE Transactions on Power Systems},
  volume={35},
  number={3},
  pages={2313--2324},
  year={2019},
  publisher={IEEE}
}

@article{monshizadeh2018output,
  title={Output impedance diffusion into lossy power lines},
  author={Monshizadeh, Pooya and Monshizadeh, Nima and De Persis, Claudio and van der Schaft, Arjan},
  journal={IEEE Transactions on Power Systems},
  volume={34},
  number={3},
  pages={1659--1668},
  year={2018},
  publisher={IEEE}
}

@inproceedings{nishino2025equilibrium,
  title={Equilibrium-Independent Passivity of Power Systems Composed of Park Synchronous Generator Models},
  author={Nishino, Taku and Ishizaki, Takayuki},
  booktitle={2025 American Control Conference (ACC)},
  pages={748--753},
  year={2025},
  organization={IEEE}
}

@article{arghir2018grid,
  title={Grid-forming control for power converters based on matching of synchronous machines},
  author={Arghir, Catalin and Jouini, Taouba and D{\"o}rfler, Florian},
  journal={Automatica},
  volume={95},
  pages={273--282},
  year={2018},
  publisher={Elsevier}
}
\end{document}